\documentclass[a4paper]{scrartcl}

\pdfoutput=1
\usepackage[english]{babel}
\usepackage{authblk}
\usepackage{booktabs}

\usepackage{amsthm}
\usepackage{amssymb}
\usepackage{amsmath}
\usepackage{xspace}
\usepackage{graphicx}
\usepackage[utf8]{inputenc}
\usepackage{url}
\usepackage[T1]{fontenc}
\usepackage{calculator}
\usepackage[mathscr]{euscript}
\allowdisplaybreaks
\usepackage{breqn}

\newtheorem{definition}{Definition}[section] 
\newtheorem{theorem}{Theorem}[section] 

\DeclareOldFontCommand{\bf}{\normalfont\bfseries}{\mathbf}
\DeclareOldFontCommand{\tt}{\normalfont}{\mathtt}

\newcommand{\rrq}[2]{\DIVIDE{#1}{#2}{\ratio}\ROUND[0]{\ratio}{\ratio}\ratio}




\newcommand{\latte}{LaTTe\xspace}
\newcommand{\azove}{Azove\xspace}

\newcommand{\Nat}{\mbox{${\mathbb N}$}} 
\newcommand{\Zat}{\mbox{${\mathbb Z}$}} 
\newcommand{\Rat}
{\mbox{${\mathchoice {\setbox0=\hbox{$\displaystyle\rm Q$}\hbox{\raise
0.15\ht0\hbox to0pt{\kern0.4\wd0\vrule height0.8\ht0\hss}\box0}}
{\setbox0=\hbox{$\textstyle\rm Q$}\hbox{\raise
0.15\ht0\hbox to0pt{\kern0.4\wd0\vrule height0.8\ht0\hss}\box0}}
{\setbox0=\hbox{$\scriptstyle\rm Q$}\hbox{\raise
0.15\ht0\hbox to0pt{\kern0.4\wd0\vrule height0.7\ht0\hss}\box0}}
{\setbox0=\hbox{$\scriptscriptstyle\rm Q$}\hbox{\raise
0.15\ht0\hbox to0pt{\kern0.4\wd0\vrule height0.7\ht0\hss}\box0}}}$}}

\newcommand{\vect}[1]{\ensuremath{\mathbf{#1}}}

\newcommand{\Pre}{\vect{Pre}}
\newcommand{\Post}{\vect{Post}}

\newcommand{\see}[1]{ }



\newcommand{\dotr}[1]{#1^{\bullet}}
\newcommand{\dotl}[1]{^{\bullet}#1}


\newcommand{\fire}[3]{#1 \stackrel{#2}{\rightarrow} #3 }

\pagestyle{plain}\pagestyle{plain}

\usepackage{color}

\definecolor{darkred}{rgb}{0.5,0,0}
\definecolor{darkgreen}{rgb}{0,0.5,0}
\definecolor{darkblue}{rgb}{0,0,0.5}

\definecolor{grey}{rgb}{0.5,0.5,0.5}

\newcommand{\nos}[2]{\ensuremath{(\!({#1})\!)({#2})}}

\newcommand*{\eod}{\hfill\ensuremath{\blacksquare}}

\usepackage{tikz}
\usetikzlibrary{arrows}
\usetikzlibrary{decorations.markings}
\usetikzlibrary{shadows}

\tikzset{
  big stealth/.style={
    decoration={markings,mark=at position -(0.1pt) with {\arrow[scale=2*\scale]{stealth}}},
    postaction={decorate},
    shorten >=0.4pt}}
\tikzset{
  big ring/.style={
    decoration={markings,mark=at position -(0.1pt) with {\arrow[scale=1.5*\scale]{o}}},
    postaction={decorate},
    shorten >=8pt*\scale}}
\tikzset{
  big disc/.style={
    decoration={markings,mark=at position -(0.1pt) with {\arrow[scale=1.5*\scale]{*}}},
    postaction={decorate},
    shorten >=8pt*\scale}}
\tikzset{
  big box/.style={
    decoration={markings,mark=at position -(0.1pt) with {\arrow[scale=1.5*\scale]{open square}}},
    postaction={decorate},
    shorten >=8pt*\scale}}
\tikzset{
  big tile/.style={
    decoration={markings,mark=at position -(0.1pt) with {\arrow[scale=1.5*\scale]{square}}},
    postaction={decorate},
    shorten >=8pt*\scale}}
\tikzstyle{place}=[circle, very thick, fill, top color=white, bottom color=white, draw=black, minimum size=40pt, drop shadow]
\tikzstyle{trans}=[rectangle, very thick, fill, top color=white, bottom color=white, draw=black, minimum size=32pt, drop shadow]
\tikzstyle{arc}=[thick, big stealth, black]
\tikzstyle{read}=[thick, big disc, black]
\tikzstyle{inhibitor}=[thick, big ring, black]
\tikzstyle{stopwatch}=[thick, big tile, black]
\tikzstyle{stopwatchinhibitor}=[thick, big box, black]
\tikzstyle{priority}=[thick, big stealth, orange]
\tikzstyle{enabling}=[thick, big disc, orange]
\tikzstyle{disabling}=[thick, big ring, orange]
\tikzstyle{token}=[circle, fill, draw=black, minimum size=4pt]
\tikzstyle{glob-options}=[label distance=6pt*\scalenodes*\scale,x=1pt,y=-1pt,scale=\scale,every node/.style={transform shape}]
\tikzstyle{virtual}=[circle, draw=white, minimum size=1pt]

\usepackage{hyperref}
\hypersetup{
    colorlinks = true,
    allcolors = {magenta}
}



\newcommand{\reach}{\mathcal{R}}
\newcommand{\netabs}[3]{#1 \mathop{\rhd{\kern -0.5ex}_{#2}} #3}

\newcommand{\vars}{\mathcal{V}}

\newcommand{\V}{\vect{V}}
\newcommand{\uars}{\mathcal{U}}

\newcommand{\lift}{\uparrow}
\newcommand{\proj}{\downarrow}
\newcommand{\qsol}[1]{\langle #1 \rangle}

\sloppy
\begin{document}


\title{Petri Net Reductions\\ for Counting Markings}

\author[1]{Bernard~Berthomieu\thanks{This material is based upon work supported by the
    RTRA STAE project IFSE2: ``technical engineering for embedded
    systems''.}}  
\author[1]{Didier Le Botlan} 
\author[1]{Silvano~{Dal~Zilio}}
\affil[1]{Université de Toulouse, CNRS, INSA, Toulouse, France}
\date{}
\maketitle
\begin{abstract}
  We propose a method to count the number of reachable markings of a
  Petri net without having to enumerate these first. The method relies
  on a structural reduction system that reduces the number of places
  and transitions of the net in such a way that we can faithfully
  compute the number of reachable markings of the original net from
  the reduced net and the reduction history.
  The method has been implemented and computing experiments show that
  reductions are effective on a large benchmark of models.
\end{abstract}

\bibliographystyle{plain}

\section{Introduction}
\label{sec:introduction}

Structural reductions are an important class of optimization
techniques for the analysis of Petri Nets (PN for short).  The idea is
to use a series of reduction rules that decrease the size of a net
while preserving some given behavioral properties. These reductions
are then applied iteratively until an irreducible PN is reached on
which the desired properties are checked directly. This approach,
pioneered by
Berthelot~\cite{berthelot1985checking,berthelot1986transformations},
has been used to reduce the complexity of several problems, such as
checking for boundedness of a net, for liveness analysis, for checking
reachability properties~\cite{jensen2016tapaal} or for LTL model
checking~\cite{esparza2001net}.

In this paper, we enrich the notion of structural reduction by keeping
track of the relation between the markings of an (initial) Petri net,
$N_1$, and its reduced (final) version, $N_2$. We use reductions of
the form $(N_1,Q,N_2)$, where $Q$ is a system of linear equations that
relates the (markings of) places in $N_1$ and $N_2$. The reductions
are tailored so that the state space of $N_1$ (its set of reachable
markings) can be reconstructed from that of $N_2$ and equations $Q$.
In particular, when $N_1$ is totally reduced ($N_2$ is then the empty
net), the state space of $N_1$ corresponds with the set of non-negative
integer solutions to $Q$.
Then $Q$ acts as a symbolic representation for sets of markings, in much the
same way one can use decision diagrams or SAT-based techniques.

In practice, reductions often lead to an irreducible non-empty
residual net. In this case, we can still benefit from an hybrid
representation combining the state space of the residual net
(expressed, for instance, using a decision diagram) and the symbolic
representation provided by linear equations.
This approach can provide a very compact representation of the state
space of a net. Therefore it is suitable for checking
\emph{reachability properties}, that is whether some reachable marking
satisfies a given set of linear constraints.  However, checking
reachability properties could benefit of more aggressive reductions
since it is not generally required there that the full state space is
available (see e.g. \cite{jensen2016tapaal}).  For particular
properties, it could be more efficient to only approximate the set of
reachable markings, and therefore to use more aggressive reduction
rules (see e.g. \cite{jensen2016tapaal}).

At the opposite, we focus on computing a (symbolic) representation of the full state space.
A positive outcome of our choice is that we can derive a method to count
the number of reachable markings of a net without having to enumerate
them first.

Computing the cardinality of the reachability set has several
applications. For instance, it is a straightforward way to assess the
correctness of tools---all tools should obviously find the same
results on the same models. This is the reason why this problem was
chosen as the first category of examination in the recurring
Model-Checking Contest (MCC)~\cite{mcc:2017,KordonGHPJRH17}. We have
implemented our approach in the framework of the TINA
toolbox~\cite{berthomieu2004tool} and used it on the large set of
examples provided by the MCC (see
Sect.~\ref{sec:experimental-results}). Our results are very
encouraging, with numerous instances of models where our performances
are several orders of magnitude better than what is observed with the
best available tools.

 \vspace{5pt}
\noindent\textbf{Outline.} 
We first define the notations used in the paper then describe the reduction
system underlying our approach, in Sect.~\ref{sec:reductions}. After
illustrating the approach on a full example, in Sect.~\ref{sec:impl-example-appl}, we prove in
Sect.~\ref{sec:correctness} that the equations associated with
reductions allow one to reconstruct the state space of the initial
net from that of the reduced one. Section~\ref{sec:counting} discusses
how to count markings from our representation of a state space while
Sect.~\ref{sec:experimental-results} details our experimental
results. We conclude with a discussion on related works and possible
future directions.


\section{Petri Nets}

Some familiarity with Petri nets is assumed from the reader. We recall
some basic terminology. Throughout the text, comparison ($=$, $\ge$)
and arithmetic operations ($-$, $+$) are extended pointwise to
functions.

A marked {\em Petri net} is a tuple
$N = ({P},{T},{\Pre},{\Post}, m_0)$ in which $P$, $T$ are disjoint
finite sets, called the {\em places} and {\em transitions},
$\Pre, \Post : {T} \to ({P} \to \Nat)$ are the {\em pre} and {\em
  post} {\em condition} functions, and $m_0 : P \to \Nat$ is the {\em
  initial marking}.

Figure~\ref{fig:petri} gives an example of Petri net, taken from \cite{Stahl_AWPN},
using a graphical syntax: places are pictured as circles, transitions
as squares, there is an arc from place $p$ to transition $t$ if $\Pre(t)(p) > 0$, and one from
transition $t$ to place $p$ if $\Post(t)(p) > 0$.
The arcs are weighted by the values of the corresponding pre or post conditions (default weight is 1).
The initial marking of the net associates integer 1 to place $p_0$ and 0 to all others.

\begin{figure}[!htb]
  \centering
  \includegraphics[width=0.9\textwidth]{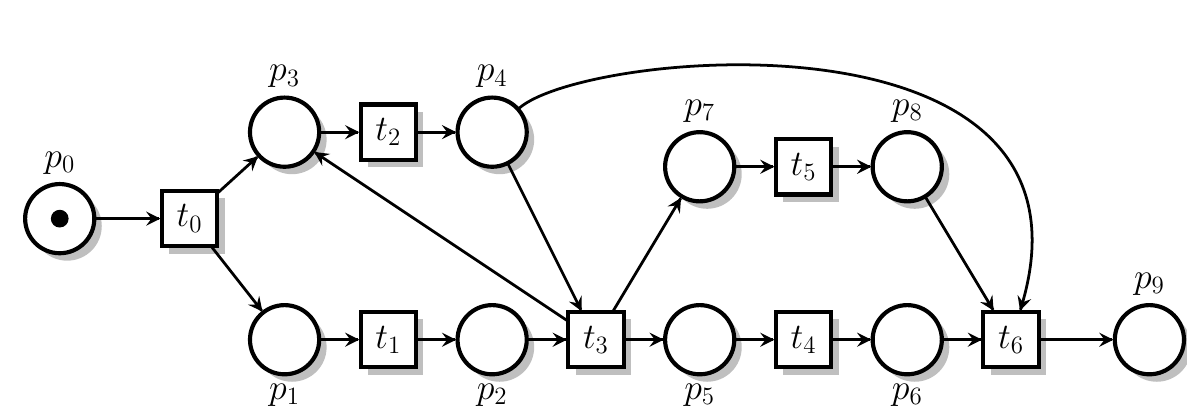}
  \caption{An example Petri net}\label{fig:petri}
\end{figure}

A {\em marking} $m : P \to \Nat$ maps a number of {\em tokens} to every place.
A transition $t$ in $T$ is said {\em enabled} at $m$ if
$m \ge \Pre(t)$.
If enabled at $m$, transition $t$ may {\em fire} yielding a marking
$m'= m - \Pre(t) + \Post(t)$.  This is written
${m \smash[t]{\stackrel{t}{\rightarrow}} m'}$, or simply
$\fire{m}{}{m'}$ when only markings are of interest.
Intuitively, places hold integers and together encode the state (or marking) of a net;
transitions define state changes.

The {\em reachability set}, or {\em state space}, of $N$ is the set of
markings $\reach(N) = \{\,m~\mid~\fire{m_0}{*}{m}\,\}$, where
$\fire{}{*}{}$ is the reflexive and transitive closure of
$\fire{}{}{}$.

A \emph{firing sequence} $\sigma$ over $T$ is a sequence
$t_{1}, \dots, t_{n}$ of transitions in $T$ such that there are some
markings $m_1 \dots, m_{n+1}$ with
$\smash[t]{\fire{m_1}{t_1}{m_2} \land \dots \land
  \fire{m_n}{t_n}{m_{n+1}}}$. This can be written
$\fire{m_1}{\sigma}{m_{n+1}}$.
Its {\em displacement}, or {\em marking change}, is $\Delta(\sigma) = \Sigma_{i=1}^n (\Post(t_i) - \Pre(t_i))$,
where $\Delta : P \to \Zat$, and its {\em hurdle} $H(\sigma)$ is the
smallest marking (pointwise) from which the sequence is firable.\\
Displacements (or marking changes) and hurdles are discussed in \cite{hack1976decidability},
where the existence and uniqueness of hurdles is proved.
As an illustration, the displacement of sequence $t_2t_5t_3$
in the net Figure \ref{fig:petri} is $\{(p_2,-1),(p_5,1),(p_9,1)\}$
(and 0 for all other places, implicitly), its hurdle is $\{(p_2,1),(p_3,1),(p_7,1)\}$.

The {\em postset} of a transition $t$ is
$\dotr{t} = \{\,p \mid \Post(t)(p) > 0\,\}$, its {\em preset} is
$\dotl{t} = \{\,p \mid \Pre(t)(p) > 0\,\}$.  Symmetrically for places,
$\dotr{p} = \{\,t \mid \Pre(t)(p)>0\,\}$ and
$\dotl{p} = \{\,t \mid \Post(t)(p)>0\,\}$.


A net is {\em ordinary} if all its arcs have weight one; for all
transition $t$ in $T$, and place $p$ in $P$, we have
$\Pre(t)(p) \le 1$ and $\Post(t)(p) \le 1$.  Otherwise it is said {\em
  generalized}.

A net $N$ is {\em bounded} if there is an (integer) bound $b$ such
that $m(p) \le b$ for all $m \in \reach(N)$ and $p \in P$.  The net is
said {\em safe} when the bound is $1$. All nets considered in this
paper are assumed bounded.

The net in Figure \ref{fig:petri} is ordinary and safe. Its state space
holds 14 markings.


\section{The Reduction System}\label{sec:reductions}

We describe our set of reduction rules using three main categories.
For each category, we give a property that can be used to recover the
state space of a net, after reduction, from that of the reduced net.

\subsection{Removal of Redundant Transitions}\label{sec:redTrans}

A transition is redundant when its effects can always be achieved by
firing instead an alternative sequence of transitions. Our definition
of redundant transitions slightly strengthens that of
{\em bypass} transitions in \cite{recalde1997improving}.  It is not
fully structural either, but makes it easier to identify special cases
structurally.

\begin{definition}[Redundant transition]\label{def:redTrans}
  Given a net $(P,T,\Pre,\Post,m_0)$, a transition $t$ in $T$ is {\em
    redundant} if there is a firing sequence $\sigma$ over
  $T \setminus \{t\}$ such that $\Delta(t) = \Delta(\sigma)$ and
  $H(t) \ge H(\sigma)$.\eod
\end{definition}

There are special cases that do not require to explore combinations of
transitions. This includes {\em identity} transitions, such that
$\Delta(t) = 0$, and {\em duplicate} transitions, such that for some other
transition $t'$ and integer $k$, $\Delta (t) = k.\Delta(t')$.
Finding redundant transitions using Definition
\ref{def:redTrans} can be convenient too, provided the candidate
$\sigma$ are restricted (e.g. in length). Figure~\ref{fig:redundant}
(left) shows some examples of redundant transitions.  Clearly,
removing a redundant transition from a net does not change its state
space.

\begin{theorem}\label{the:redTrans}
  If net $N'$ is the result of removing some redundant transition in net $N$ then
  $\reach(N) = \reach(N')$
\end{theorem}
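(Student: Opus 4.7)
The plan is to prove the two inclusions separately, with the $\subseteq$ direction being the substantive one.

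For $\reach(N') \subseteq \reach(N)$, I would simply observe that $N'$ has the same places and initial marking as $N$, and a strict subset of the transitions, with identical $\Pre$ and $\Post$ on the surviving transitions. Thus any firing sequence witnessing $m \in \reach(N')$ is also a firing sequence in $N$, which gives $m \in \reach(N)$.

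For $\reach(N) \subseteq \reach(N')$, fix $m \in \reach(N)$ and a firing sequence $\sigma_0$ with $\fire{m_0}{\sigma_0}{m}$ in $N$. Let $\sigma$ be the witnessing sequence over $T \setminus \{t\}$ from Definition~\ref{def:redTrans}, satisfying $\Delta(t) = \Delta(\sigma)$ and $H(t) \ge H(\sigma)$. The idea is to replace every occurrence of $t$ in $\sigma_0$ by $\sigma$, obtaining a sequence $\sigma_0'$ over $T \setminus \{t\}$, and then to argue that $\fire{m_0}{\sigma_0'}{m}$ still holds in $N'$. I would prove this by induction on the number $k$ of occurrences of $t$ in $\sigma_0$, with the base case $k=0$ being immediate.

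For the inductive step, consider the first occurrence of $t$ in $\sigma_0$, writing $\sigma_0 = \alpha\, t\, \beta$ with $\alpha$ containing no $t$. Then in $N$ we have $\fire{m_0}{\alpha}{m_1} \trans{t} m_2 \fire{}{\beta}{m}$ for some intermediate markings. The key two facts I need are: (i) since $m_1 \ge \Pre(t) = H(t) \ge H(\sigma)$, the sequence $\sigma$ is firable from $m_1$, and (ii) firing $\sigma$ from $m_1$ yields $m_1 + \Delta(\sigma) = m_1 + \Delta(t) = m_2$, the very same marking reached by $t$. Hence $\fire{m_1}{\sigma}{m_2}$ in $N'$, and splicing this in gives $\fire{m_0}{\alpha\, \sigma\, \beta}{m}$ in $N$ with one fewer occurrence of $t$. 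Applying the induction hypothesis to $\alpha\, \sigma\, \beta$ (whose length is possibly longer but whose $t$-count is strictly smaller) finishes the argument.

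The only subtlety, and the step I would be most careful about, is justifying (i) and (ii) from the characterization of hurdles and displacements: namely, that for any sequence $\tau$, $m \ge H(\tau)$ is both necessary and sufficient for $\tau$ to be firable from $m$, and that firing yields $m + \Delta(\tau)$. These are standard properties from Hack~\cite{hack1976decidability}, cited in the paper, so I would invoke them rather than reprove them. Everything else is bookkeeping on firing sequences.
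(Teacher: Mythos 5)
Your proof is correct; the paper itself offers no proof of this theorem (it merely asserts just before the statement that removing a redundant transition "clearly" does not change the state space), and your substitution argument --- induction on the number of occurrences of $t$, replacing each by the witness $\sigma$ using $H(t)=\Pre(t)\ge H(\sigma)$ for firability and $\Delta(t)=\Delta(\sigma)$ for reaching the same marking --- is exactly the standard argument the authors leave implicit. The two properties you flag (upward closure of the enabling set characterized by the hurdle, and that firing $\tau$ from $m$ yields $m+\Delta(\tau)$) are indeed the ones from Hack that the paper already cites, so invoking them is appropriate.
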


\begin{figure}[!htb]
  \centering
  \begin{tabular}{c@{~~~~}c}
    \begin{tabular}{c} 
      \includegraphics[width=0.45\textwidth]{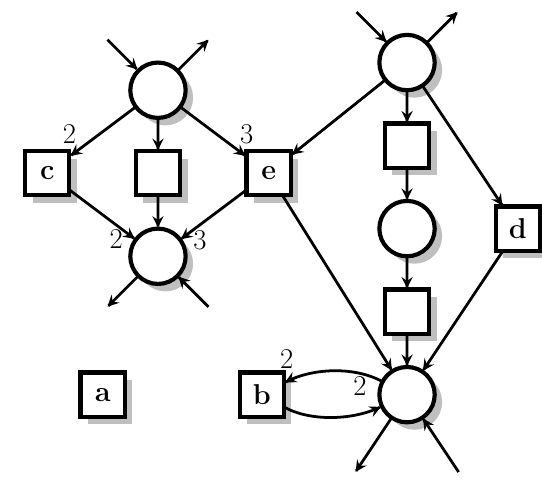}\\
      identity (a,b), duplicate (c)\\
      general redundant (d,e)  \end{tabular} &
    \begin{tabular}{c} 
      \includegraphics[width=0.45\textwidth]{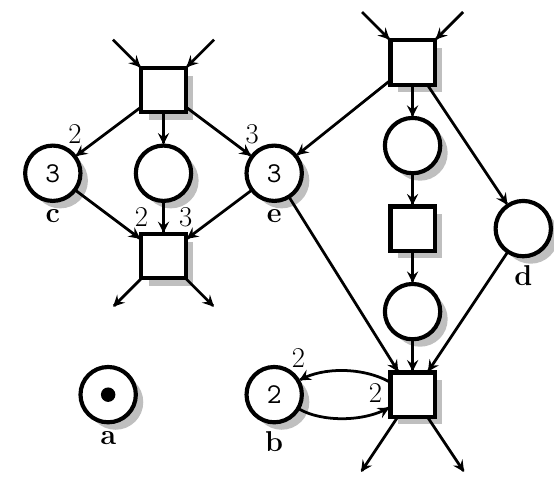}\\
      constant (a,b), duplicate (c)\\
      general redundant (d,e) \end{tabular} \\
  \end{tabular}
  \caption{Some examples of redundant transitions (left) and places (right)}\label{fig:redundant}
\end{figure}

\subsection{Removal of Redundant Places}\label{sec:redPlace}

A place is redundant if it never restricts the firing of its output transitions.
Removing redundant places from a net preserves its language of firing sequences \cite{berthelot1986transformations}.
We wish to avoid enumerating marking for detecting such places, and further be able to recover the marking of
a redundant place from those of the other places.
For these reasons, our definition of redundant places is a slightly strengthened version of that of {\em structurally redundant} places
in \cite{berthelot1986transformations} (last clause is an equation).

\begin{definition}[redundant place]\label{def:redPlace}
  Given a net $(P,T,\Pre,\Post,m_0)$, a place $p$ in $P$ is
  \emph{redundant} if there is some set of places $I$ from
  $P \setminus \{p\}$, some valuation
  $v : (I \cup \{p\}) \rightarrow (\Nat - \{0\})$, and some constant
  $b \in \Nat$ such that, for any $t \in T$:
  \begin{enumerate}
  \item The weighted initial marking of $p$ is not smaller than that
    of $I$:\par $b = v(p).m_0(p) - \Sigma_{q \in I} v(q).m_0(q)$
  \item To fire $t$, the difference between its weighted precondition on $p$ and that on $I$
     may not be larger than $b$:
     $v(p).\Pre(t)(p) - \Sigma_{q \in I} v(q).\Pre(t)(q) \le b$
   \item When $t$ fires, the weighted growth of the marking of $p$ is
     equal to that of $I$:\par
     $v(p).(\Post(t)(p) - \Pre(t)(p)) = \Sigma_{q \in I}
     v(q).(\Post(t)(q) - \Pre(t)(q))$ \eod
  \end{enumerate}
\end{definition}

This definition can be rephrased as an integer linear programming
problem~\cite{silva1996linear}, convenient in practice for computing
redundant places in reasonably sized nets (say when $|P|\ \le 50$).
Like with redundant transitions, there are special
cases that lead to easily identifiable redundant places.  These are
{\em constant} places---those for which set $I$ in the definition is
empty---and {\em duplicated} places, when set $I$ is a
singleton. Figure \ref{fig:redundant} (right) gives some examples of
such places.

From Definition \ref{def:redPlace}, we can show that the marking of a
redundant place $p$ can always be computed from the markings of the
places in $I$ and the valuation function $v$. Indeed, for any marking
$m$ in $\reach(N)$, we have
$v(p).m(p) = \Sigma_{q \in I} v(q).m(q) + b$, where the constant $b$
derives from the initial marking $m_0$. Hence we have a relation
$k_p.m(p) = \rho_p(m)$, where $k_p = v(p)$ and $\rho_p$ is some linear
expression on the places of the net.

\begin{theorem}\label{the:redPlace}
  If $N'$ is the result of removing some redundant place $p$ from net
  $N$, then there is an integer constant $k \in \Nat^*$, and a linear
  expression $\rho$, such that, for all marking $m$:
  $m \cup \{(p,(1/k).\rho(m))\} \in \reach(N) \Leftrightarrow m \in
  \reach(N')$.
\end{theorem}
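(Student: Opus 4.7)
The plan is to take $k = v(p)$ (which lies in $\Nat^*$ because $v$ ranges in $\Nat - \{0\}$) and the linear expression
$\rho(m) = \Sigma_{q \in I} v(q)\cdot m(q) + b$,
with $v$, $I$, $b$ as in Definition~\ref{def:redPlace}, and then establish the $\Leftrightarrow$ by showing that firings in $N$ and $N'$ are in one-to-one correspondence along the projection that drops the coordinate of $p$.

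I would first isolate a key invariant lemma, proved by induction on firing sequences of $N$: for every $m \in \reach(N)$,
$v(p)\cdot m(p) = \Sigma_{q \in I} v(q)\cdot m(q) + b$, i.e. $v(p)\cdot m(p) = \rho(m)$.
The base case $m = m_0$ is exactly condition~1 of Definition~\ref{def:redPlace}. For the inductive step, if $\fire{m}{t}{m'}$, then condition~3 says that the weighted increment on $p$ equals the weighted increment on $I$, so the equality is preserved. A corollary is that $\rho(m)$ is always divisible by $k = v(p)$ on markings of $N$, so $(1/k)\rho$ takes values in $\Nat$ along reachable markings.

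Next I would handle the forward direction ($\Rightarrow$). Suppose $\hat m = m \cup \{(p,(1/k)\rho(m))\} \in \reach(N)$, witnessed by a firing sequence $\fire{m_0}{\sigma}{\hat m}$ in $N$. Removing $p$ from $N$ only removes one component of $\Pre(t)$ and $\Post(t)$ for each $t$; hence enabling and firing in $N'$ are the restriction to $P \setminus \{p\}$ of those of $N$. The same sequence $\sigma$ is therefore firable from $m_0$ projected on $P \setminus \{p\}$ and yields $m$, so $m \in \reach(N')$.

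The more delicate direction is ($\Leftarrow$). Given $\fire{m_0'}{\sigma}{m}$ in $N'$ where $m_0'$ is $m_0$ restricted to $P \setminus \{p\}$, I would show by induction on the length of $\sigma$ that (i) the same sequence fires from $m_0$ in $N$, and (ii) the resulting marking on $p$ equals $(1/k)\rho(m)$. Property (ii) at each step follows from the invariant lemma applied to the reachable marking in $N$ constructed along the induction. The core obstacle, and really the only non-routine point, is justifying that each transition $t$ taken along $\sigma$ in $N'$ is also enabled at the corresponding marking in $N$, i.e.\ that $m'(p) \ge \Pre(t)(p)$ where $m'$ is the current marking in $N$. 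Here I would combine the invariant with condition~2: since $m' \ge \Pre(t)$ on all places in $I$ (because $t$ fires in $N'$ at $m'|_{P \setminus \{p\}}$), and using $v(q) > 0$,
$v(p)\cdot m'(p) \;=\; \Sigma_{q\in I} v(q)\cdot m'(q) + b \;\ge\; \Sigma_{q\in I} v(q)\cdot \Pre(t)(q) + b \;\ge\; v(p)\cdot \Pre(t)(p),$
so $m'(p) \ge \Pre(t)(p)$ as required. The displacement on $p$ at that step then matches $(1/k)$ times the change of $\rho$, by condition~3, closing the induction and completing the equivalence.
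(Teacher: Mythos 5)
Your proof is correct and follows essentially the approach the paper intends: the paper states the invariant $v(p).m(p) = \Sigma_{q \in I} v(q).m(q) + b$ in the paragraph preceding the theorem and delegates the preservation of firing sequences to the cited result of Berthelot, whereas you prove that part directly (the inductive enabledness argument combining the invariant with condition~2 is exactly the standard proof of that cited fact). So you have simply supplied in full the details the paper leaves implicit; no gap.
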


\subsection{Place Agglomerations}\label{sec:agglo}

Conversely to the rules considered so far, place agglomerations do not
preserve the number of markings of the nets they are applied to. They
constitute the cornerstone of our reduction system; the purpose of the
previous rules is merely to simplify the net so that agglomeration
rules can be applied. We start by introducing a convenient notation.

\begin{definition}[Sum of places]\label{def:sum}
  A place $a$ is the sum of places $p$ and $q$, written
  $a = p \boxplus q$, if: $m_0(a) = m_0(p) + m_0(q)$ and, for all
  transition $t$, $\Pre(t)(a)=\Pre(t)(p)+\Pre(t)(q)$ and
  $\Post(t)(a)=\Post(t)(p)+\Post(t)(q)$.\eod
\end{definition}

Clearly, operation $\boxplus$ is commutative and associative. We
consider two categories of \emph{place agglomeration} rules; each one
consisting in the simplification of a sum of places. Examples are
shown in Fig.~\ref{fig:agglom}.

\begin{figure}[!tb]
  \centering
  \begin{tabular}{c@{~~~~}c@{~~~~}c}
    \begin{tabular}{c} 
      \includegraphics[height=12em]{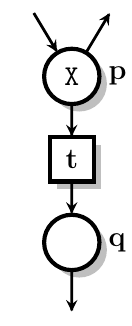} \end{tabular} 
    &
      \begin{tabular}{c} {\Large $\rightarrow$} \end{tabular} 
    &
      \begin{tabular}{c}
        \includegraphics[height=12em]{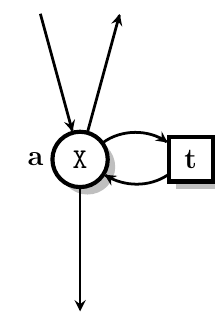} \end{tabular}\\
    \begin{tabular}{c}
      \includegraphics[height=12em]{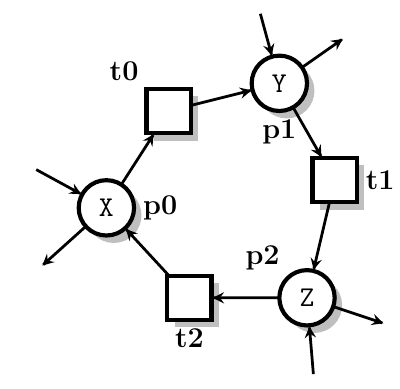} \end{tabular}
    &
      \begin{tabular}{c} {\Large $\rightarrow$} \end{tabular} 
    &
      \begin{tabular}{c} \includegraphics[height=12em]{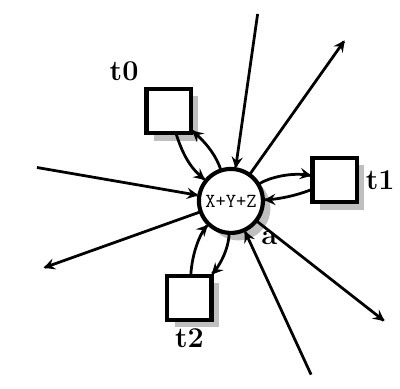}
      \end{tabular}
  \end{tabular}
 \caption{Agglomeration examples: chain (top), loop (for $n=3$, bottom)}\label{fig:agglom}
\end{figure}

\begin{definition}[Chain agglomeration]\label{def:chain}
  Given a net $(P,T,\Pre,\Post,m_0)$, a pair of places $p$, $q$ in $P$
  can be {\em chain agglomerated} if there is some $t \in T$ such
  that: $\dotl{t} = \{p\}$; $\dotr{t} = \{q\}$;
  $\Pre(t)(p) = \Post(t)(q) = 1$; $\dotl{q} = \{t\}$; and
  $m_0(q) = 0$. Their agglomeration consists of replacing places $p$
  and $q$ by a place $a$ equal to their sum: $a = p \boxplus q$.  \eod
\end{definition}

\begin{definition}[Loop agglomeration]\label{def:loop}
  A sequence of $n$ places $(\pi_i)_{i=0}^{n-1}$ can be {\em loop
    agglomerated} if the following condition is met:
    \[
    (\forall i < n)(\exists t \in T)(\Pre(t)=\{(\pi_{i},1)\} \land
    \Post(t)=\{(\pi_{(i + 1)(\text{mod}\ {n})},1)\})~.\]
  Their agglomeration consists of replacing places
  $\pi_0, \dots, \pi_{n-1}$ by a single place, $a$, defined as their
  sum: $a = \boxplus_{i=0}^{n-1} {\pi_i}$.\eod
\end{definition}

Clearly, whenever some place $a$ of a net obeys $a = p \boxplus q$ for
some places $p$ and $q$ of the same net, then place $a$ is redundant
in the sense of definition \ref{def:redPlace}.
The effects of agglomerations on markings are stated by Theorem \ref{the:agglo}.

\begin{theorem}\label{the:agglo}
  Let $N$ and $N'$ be the nets before and after agglomeration of some
  set of places $A$ as place $a$. Then for all markings $m$ over
  $(P \setminus A)$ and $m'$ over $A$ we have:
  $(m \cup m') \in \reach(N) \Leftrightarrow m \cup \{(a,\Sigma_{p \in
    A}m'(p)\} \in \reach(N')$.
\end{theorem}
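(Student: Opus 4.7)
The plan is to prove the biconditional separately for chain and loop agglomerations (Definitions~\ref{def:chain} and~\ref{def:loop}), in each case via a bidirectional translation between firing sequences of $N$ and $N'$. The guiding observation is that $a = \boxplus_{p \in A} p$ (Definition~\ref{def:sum}) yields $m(a) = \Sigma_{p \in A} m(p)$, $\Pre(u)(a) = \Sigma_{p \in A} \Pre(u)(p)$, and the analogous identity for $\Post$, so every transition $u$ acts on $a$ in $N'$ as the sum of its actions on the places of $A$ in $N$. In particular, each of the ``internal'' transitions defining the agglomeration---the linking $t$ in the chain case, or the cycle transitions $t_0, \dots, t_{n-1}$ in the loop case---moves one token from one place of $A$ to another place of $A$ and so becomes an identity transition on $a$ in $N'$.

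For the forward direction ($\Rightarrow$), the translation is straightforward: given $\sigma$ in $N$ reaching $m \cup m'$, I would delete every occurrence of an internal transition to obtain $\sigma'$ in $N'$. Each surviving transition stays enabled in $N'$ because the pointwise inequalities $M(p) \ge \Pre(u)(p)$ for $p \in A$ sum to $M(a) \ge \Pre(u)(a)$; conservation of the sum then shows that $\sigma'$ reaches $m \cup \{(a, \Sigma_{p \in A} m'(p))\}$.

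The backward direction ($\Leftarrow$) is the interesting one, because we must reach exactly the distribution $m'$ over $A$ and not merely some distribution summing to $k = \Sigma_{p \in A} m'(p)$. For chain agglomeration, given $\sigma' = u_1 \dots u_s$ in $N'$ (without loss of generality containing no occurrence of $t$, which is identity), I would build $\sigma = t^{n_0} u_1 t^{n_1} \dots u_s t^{n_s}$ in $N$. Writing $S_l$ for the cumulative count of $t$-firings before $u_l$, the hypotheses $m_0(q) = 0$ and $\dotl{q} = \{t\}$ yield $m(q) = S_l - B_l$ and $m(p) = A_l - S_l$ at the moment of firing $u_l$, where $A_l$ and $B_l$ record respectively the $p$-displacement and the $q$-consumption contributed by $u_1, \dots, u_{l-1}$; enabledness of $u_l$ in $N$ then becomes $S_l \in [B_l + \Pre(u_l)(q),\; A_l - \Pre(u_l)(p)]$, an interval non-empty precisely because $u_l$ was enabled in $\sigma'$ (i.e.\ $A_l - B_l \ge \Pre(u_l)(a)$). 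I would take $S_l$ greedily minimal (which is automatically non-decreasing, and in fact equals $B_{l+1}$ since $B$ is non-decreasing) and then append $n_s = m'(q)$ additional $t$-firings at the end, which by a short computation forces $m = m'$ on $A$. For loop agglomeration, the same scheme applies with the cyclic $t_i$'s replacing $t$: tokens are rearranged within $A$ using rotations of the cycle, both before each external transition and as a final adjustment. The main obstacle is verifying that the inserted internal firings themselves stay valid; for chain this reduces cleanly to the enabledness of each $u_l$ in $\sigma'$, while for loop it additionally requires the auxiliary fact that, in the subnet composed only of the $t_i$'s, every marking of given total $k$ is reachable from every other such marking---a standard consequence of the strong connectivity of the cycle.
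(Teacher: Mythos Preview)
Your argument is correct and complete enough as a plan; the interval computation for the chain case checks out (in particular $S_l = B_{l+1}$ makes the $t$-blocks non-negative and leaves enough tokens in $p$ to fire them, while the final $t^{m'(q)}$ lands exactly on $m'$), and the loop case is handled by the cycle-reachability fact you state. However, your route is genuinely different from the paper's. For the forward direction the paper does not manipulate firing sequences directly: it inserts $a$ into $N$ as a redundant place (Definition~\ref{def:redPlace}), obtaining an intermediate net $N^+$ with the same firing sequences as $N$, and then observes that dropping $p,q$ can only relax preconditions; your sum-of-preconditions argument is the elementary unfolding of that idea. For the backward direction the contrast is sharper: the paper first proves a simulation invariant (their Property~(Z)) showing that any $t$-free sequence of $N'$ can be matched in $N$ while keeping $q$ empty, reaching the canonical split $(p=x,\,q=0)$, and only then redistributes via $t^{m'(q)}$; you instead build the interleaved sequence $t^{n_0}u_1 t^{n_1}\cdots$ in one shot with a greedy choice of $S_l$. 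Your approach is more explicit and yields a concrete witness sequence with less indirection; the paper's is more modular (it reuses the redundant-place machinery and isolates a clean bisimulation-style lemma), which arguably scales better if one later wants to extend the agglomeration rules.
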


\begin{proof}  
  Assume $N$ is a net with set of places $P$. Let us first consider
  the case of the chain agglomeration rule in Fig.~\ref{fig:agglom}
  (top). We have to prove that for all marking $m$ of
  $P \setminus \{p, q\}$ and for all values $x,y$ in $\Nat$:
  \[
    m \cup \{(p,x),(q,y)\} \in \reach(N) \Leftrightarrow m \cup
    \{(a,x+y)\} \in \reach(N')
  \]

 \noindent\textbf{Left to right (L):} Let $N^+$ be net $N$ with place $a = p \boxplus q$ added.  Clearly,
 $a$ is redundant in $N^+$, with $v(a)= v(p) = v(q)=1$. So $N$ and
 $N^+$ admit the same firing sequences, and for any
 $m \in \reach(N^+)$, we have $m(a) = m(p) + m(q)$.  Next, removing
 places $p$ and $q$ from $N^+$ (the result is net $N'$) can only relax
 firing constraints, hence any $\sigma$ firable in $N^+$ (and thus in
 $N$) is also firable in $N'$, which implies the goal.\\

 \noindent\textbf{Right to left (R):} we use two intermediate properties ($\forall m,x,u,v$
 implicit).  We write $m \sim m'$ when $m$ and $m'$ agree on all
 places except $p$, $q$ and $a$, and $m \approx m'$ when
 $m \sim m' \land m(p) = m'(a) \land m(q) = 0$.\par

 \vspace{3pt}
 \noindent Property (1):
 $m \cup \{(a,x)\} \in \reach(N') \Rightarrow m \cup \{(p,x),(q,0)\}
 \in \reach(N)$.
    
 \vspace{3pt} Since $\Delta(t)=0$, any marking reachable in $N'$ is
 reachable by a sequence not containing $t$, call these sequences
 $t$-free. Property (1) follows from a simpler relation, namely (Z):
 \emph{whenever $m \approx m'$ ($m \in \reach(N)$, $m'\in \reach(N')$) and
 $\smash[t]{\fire{m'}{\delta}{w'}}$, ($\delta$ $t$-free), then there
 is a sequence $\omega$ such that $\fire{m}{\omega}{w}$ and
 $w \approx w'$}.
  
 Any $t$-free sequence firable in $N'$ but not in $N$ can be written
 $\sigma.t'.\gamma$, where $\sigma$ is firable in $N$ and transition
 $t'$ is not firable in $N$ after $\sigma$. Let $w$, $w'$ be the
 markings reached by $\sigma$ in $N$ and $N'$, respectively.  Since
 $\sigma$ is firable in $N$, we have $w \approx w'$, by (L) and the
 fact that $\sigma$ is $t$-free (only $t$ can put tokens in $q$). That
 $t'$ is not firable at $w$ but firable at $w'$ is only possible if
 $t'$ is some output transition of $a$ since $w \sim w'$ and the
 preconditions of all other transitions of $N'$ than $a$ are identical
 in $N$ and $N'$.  That is, $t'$ must be an output transition of
 either or both $p$ or $q$ in $N$. If $t'$ has no precondition on $q$
 in $N$, then it ought to be firable at $w$ in $N$ since
 $w(p) = w'(a)$.  So $t'$ must have a precondition on $q$; we have
 $w(q) \not\ge \Pre(t')(q)$ in $N$ and $w'(a) \ge \Pre'(t')(a)$ in
 $N'$. Therefore, we can fire transition $t$ $n$ times from $w$ in
 $N$, where $n = \Pre(t')(q)$, since $w'(a) = w(p)$ and $t'$ is
 enabled at $w'$, and this leads to a marking enabling $t'$.  Further,
 firing $t'$ at that marking leaves place $q$ in $N$ empty since only
 transition $t$ may put tokens in $q$. Then the proof of Property (1)
 follows from (Z) and the fact that Definition \ref{def:chain} ensures
 $m_0 \approx m_0'$.

 \vspace{3pt}
 \noindent Property (2): if $m \cup \{(p,x),(q,0)\} \in \reach(N)$ and
 $(u + v = x)$ then $m \cup \{(p,u),(q,v)\} \in \reach(N)$.

 \vspace{3pt} Obvious from Definition \ref{def:chain}: the tokens in
 place $p$ can be moved one by one into place $q$ by firing $t$ in
 sequence $v$ times.

 Combining Property (1) and (2) is enough to prove (R), which
 completes the proof for chain agglomerations. The proof for loop
 agglomerations is similar.
\end{proof}

\subsection{The Reduction System}\label{sec:system}

The three categories of rules introduced in the previous sections
constitute the core of our reduction system.  Our implementation
actually adds to those a few special purpose rules. We mention three
examples of such rules here, because they
play a significant role in the experimental results of
Sect.~\ref{sec:experimental-results}, but without technical details.
These rules are useful on nets
generated from high level descriptions, that often exhibit translation
artifacts like dead transitions or source places.

The first extra rule is the {\em dead transition removal} rule.  It is
sometimes possible to determine statically that some transitions of a
net are never firable. A fairly general rule for identifying
statically dead transitions is proposed in~\cite{esparza2001net}.
Removal of statically dead transitions from
a net has no effects on its state space.

A second rule allows us to remove a transition $t$ from a net $N$
when $t$ is the sole transition enabled in the initial marking and $t$
is firable only once. Then, instead of counting the markings reachable
from the initial marking of the net, we count those reachable from
the output marking of $t$ in $N$ and add~1. Removing such transitions
often yields structurally simpler nets.

Our last example is an instance of simple rules that can be used to do
away with very basic (sub-)nets, containing only a single place. This
is the case, for instance, of the source-sink nets defined below.
These rules are useful if we want to fully reduce a net.  We
say that a net is \emph{totally reduced} when its set of places and
transitions are empty ($P = T = \emptyset$).

\begin{definition}[Source-sink pair]\label{red:sourcesink}
  A pair $(p,t)$ in net $N$ is a {\em source-sink pair} if $\dotl{p} = \emptyset$, $\dotr{p} = \{t\}$, $\Pre(t) = \{(p,1)\}$ and $\Post(t) = \emptyset$.
    \eod
\end{definition}

\begin{theorem}[Source-sink pairs]\label{the:sourcesink}
  If $N'$ is the result of removing a source-sink pair $(p,t)$ in net $N$ then 
    $(\forall z \le m_0(p))(\forall m)(m \cup \{(p,z)\} \in \reach(N) \Leftrightarrow m \in \reach(N'))$.
\end{theorem}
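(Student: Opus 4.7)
The plan is to exploit the extreme rigidity of a source-sink pair: $t$ is the only transition touching $p$, it removes exactly one token from $p$ per firing and has no other effect, and no transition ever puts tokens back into $p$. Consequently, firings of $t$ are completely independent from firings of the other transitions, and the $p$-component of any reachable marking in $N$ is monotonically non-increasing, hence bounded above by $m_0(p)$.

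First I would set up a correspondence between firing sequences of $N$ and $N'$. Let $\pi(\sigma)$ denote the sequence obtained from $\sigma$ over $T$ by erasing every occurrence of $t$. Two structural facts do all the work: every transition $u \in T \setminus \{t\}$ satisfies $\Pre(u)(p) = 0$ (since $\dotr{p} = \{t\}$) and $\Post(u)(p) = 0$ (since $\dotl{p} = \emptyset$), while $t$ itself satisfies $\Pre(t)(q) = \Post(t)(q) = 0$ for every $q \ne p$. From these, one obtains by an easy induction on the length of $\sigma$ the two claims: (i) if $\fire{m_0}{\sigma}{m_N}$ in $N$ then $\fire{m_0'}{\pi(\sigma)}{m_N'}$ in $N'$, where $m_0'$ and $m_N'$ denote the restrictions of $m_0$ and $m_N$ to $P \setminus \{p\}$; and (ii) any $\sigma'$ firable from $m_0'$ in $N'$ is also firable from $m_0$ in $N$, viewed as a sequence over $T$, and reaches a marking that agrees on $P \setminus \{p\}$ with the state reached in $N'$ while leaving $m_0(p)$ tokens on $p$.

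For the $\Leftarrow$ direction, assume $m \in \reach(N')$ via some $\sigma'$. By (ii), $\sigma'$ is firable in $N$ and reaches $m \cup \{(p, m_0(p))\}$. Since $t$ is enabled whenever $p$ holds at least one token and firing $t$ merely decrements that count without touching any other place, appending $m_0(p) - z$ copies of $t$---which is legitimate because $z \le m_0(p)$---yields $m \cup \{(p, z)\}$, proving reachability in $N$. For the $\Rightarrow$ direction, assume $m \cup \{(p, z)\} \in \reach(N)$ via some $\sigma$. By (i), $\pi(\sigma)$ is firable in $N'$ and reaches $m$, so $m \in \reach(N')$. Note that the hypothesis $z \le m_0(p)$ is in fact automatic on this side, since the $p$-component is non-increasing along every firing sequence in $N$; it is stated in the theorem only to exclude a vacuous range of $z$ on the other side.

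The main obstacle, if one wishes to call it that, lies only in being careful about the split between the $p$-component and the $P \setminus \{p\}$-component when writing the induction steps for (i) and (ii); there is no deep combinatorial content, everything reduces to the observation that $t$ and the transitions in $T \setminus \{t\}$ operate on disjoint parts of the marking.
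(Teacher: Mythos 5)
Your proof is correct. The paper states Theorem~\ref{the:sourcesink} without proof, and your argument is exactly the natural one: since $\dotl{p}=\emptyset$, $\dotr{p}=\{t\}$, $\Pre(t)=\{(p,1)\}$ and $\Post(t)=\emptyset$, the transition $t$ and the transitions of $T\setminus\{t\}$ act on disjoint parts of the marking, so erasing/appending firings of $t$ gives the two directions, with $z\le m_0(p)$ guaranteeing that the $m_0(p)-z$ trailing firings of $t$ are enabled.
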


Omitting for the sake of clarity the first two extra rules mentioned
above, our final reduction system resumes to removal of redundant
transitions (referred to as the T rule) and of redundant places (R
rule), agglomeration of places (A rules) and removal of source-sink
pairs (L rule).

Rules T have no effects on markings. For the other three rules, the
effect on the markings can be captured by an equation or an
inequality.
These have shape $v_p.m(p) = \sum_{q \neq p} v_q.m(q) + b$
for redundant places, where $b$ is a constant,
shape $m(a) = \Sigma_{p \in A}
m(p)$ for agglomerations, and shape $m(p) \le
k$ for source-sink pairs, where $k$ is some constant.
In all these equations, the terms
$m(q)$ are marking variables; variable
$m(q)$ is associated with the marking of place
$q$. For readability, we will often use the name of the place instead
of its associated marking variable.  For instance, the marking
equation $2.m(p) = 3.m(q) +
4$, resulting from a (R) rule, would be simply written $2.p = 3.q +
4$.

We show in Sect.~\ref{sec:correctness} that the state space of a net
can be reconstructed from that of its reduced net and the set of
(in)equalities collected when a rule is applied. Before considering
this result, we illustrate the effects of reductions on a full
example.


\section{An Illustrative Example --- HouseConstruction}
\label{sec:impl-example-appl}

\begin{figure}[!tb]
  \makebox[\textwidth][c]{\includegraphics[width=\textwidth]{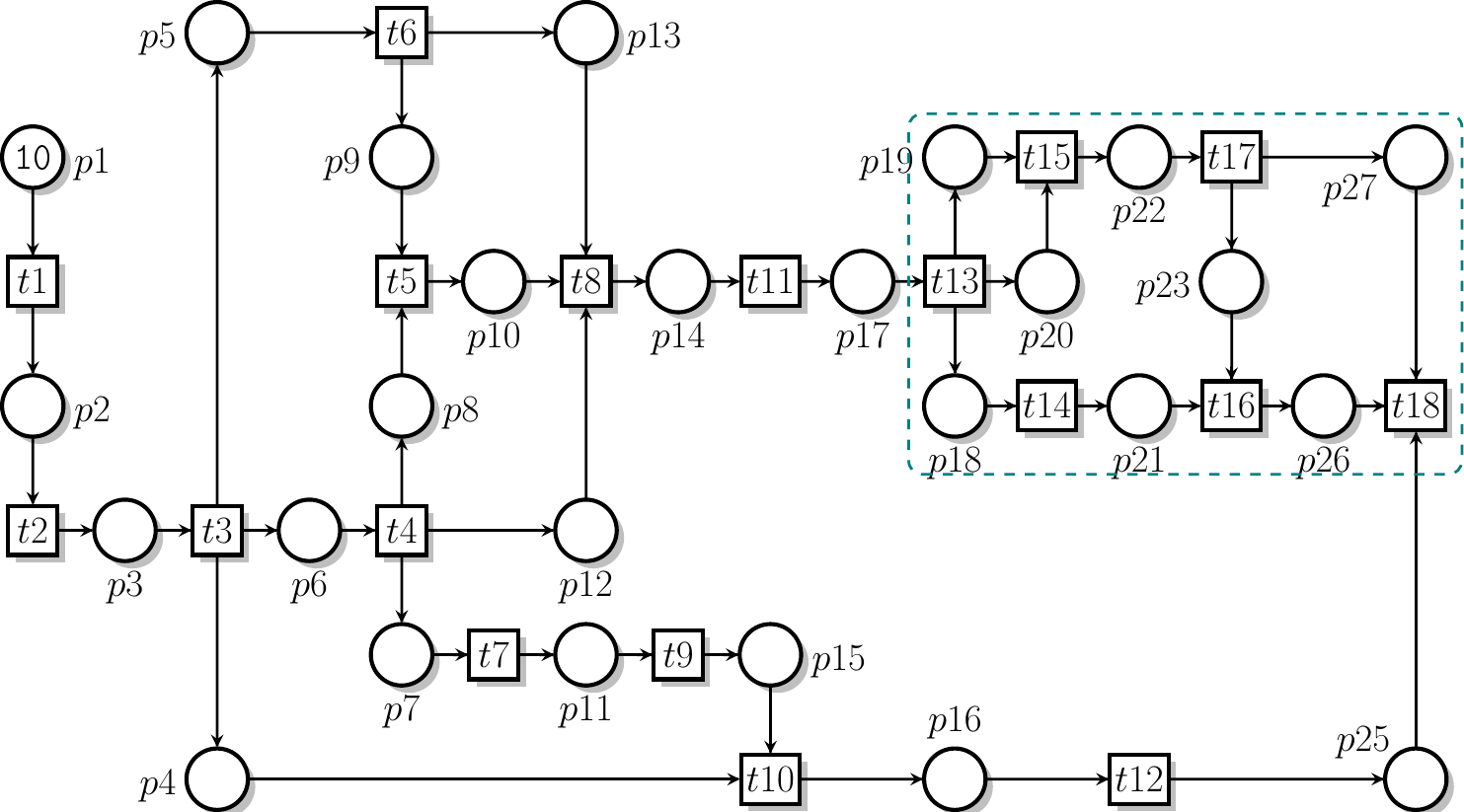}}
  \caption{HouseConstruction-10 example net}.\label{net:HouseConstruction-10}
\end{figure}

We take as example a model provided in the {\em Model Checking
  Contest} (MCC, \url{http://mcc.lip6.fr}), a recurring competition of
model-checking tools~\cite{mcc:2017}.  This model is a variation of a
Petri net model found in \cite{Peterson:1981:PNT:539513}, which is
itself derived from the PERT chart of the construction of a house
found in \cite{levy1963introduction}. The model found in the MCC
collection, reproduced in Fig.~\ref{net:HouseConstruction-10}, differs
from that of \cite{Peterson:1981:PNT:539513} in that it omits time
constraints and a final sink place. In addition, the net represents
the house construction process for a number of houses simultaneously
rather than a single one.  The number of houses being built is
represented by the marking of place $p_1$ of the net ($10$ in the net
represented in Fig.~\ref{net:HouseConstruction-10}).

We list in Fig.~\ref{trace:HouseConstruction-10} a possible reduction
sequence for our example net, one for each line.
To save space, we have omitted the removal of redundant transitions.  For each
reduction, we give an indication of its kind (R, A, \dots), the
marking equation witnessing the reduction, and a short description.
The first reduction, for instance, says that place $p_{19}$ is
removed, being a duplicate of place $p_{20}$. At the second step,
places $p11$ and $p7$ are agglomerated as place $a1$, a ``fresh''
place not found in the net yet.

\begin{figure}[!htb]
\noindent{\tt\small
  \begin{tabular}{l@{~}|@{~}l}
  \begin{tabular}{l@{~~}l}
    R |- p19 = p20          & p19 duplicate\\
    A |- a1 = p11 + p7      & agglomeration\\
    A |- a2 = p17 + p14     & agglomeration\\
    A |- a3 = p2 + p1       & agglomeration\\
    A |- a4 = p21 + p18     & agglomeration\\
    A |- a5 = p22 + p20     & agglomeration\\
    A |- a6 = p25 + p16     & agglomeration\\
    A |- a7 = p15 + a1      & agglomeration\\
    A |- a8 = p3 + a3       & agglomeration\\
    R |- p12 = p10 + p8     & p12 redundant\\
    R |- p13 = p10 + p9     & p13 redundant\\
    R |- a4 = a5 + p23      & a4 redundant\\
    R |- p27 = p23 + p26    & p27 redundant
  \end{tabular} &
  \begin{tabular}{l@{~~}l}
    R |- p4 = p6 + a7       & p4 redundant\\
    A |- a9 = a2 + p10      & agglomeration\\
    A |- a10 = a6 + a7      & agglomeration\\
    A |- a11 = p23 + a5     & agglomeration\\
    A |- a12 = p9 + p5      & agglomeration\\
    A |- a13 = a11 + p26    & agglomeration\\
    A |- a14 = a13 + a9     & agglomeration\\    
    R |- a12 = p6 + p8      & a12 redundant\\
    R |- a10 = a14 + p8     & a10 redundant\\
    A |- a15 = a14 + p8     & agglomeration\\
    A |- a16 = p6 + a8      & agglomeration\\
    A |- a17 = a15 + a16    & agglomeration\\
    L |- a17 <= 10          & a17 source
  \end{tabular}
  \end{tabular}
}
  \caption{Reduction traces for net HouseConstruction-10}\label{trace:HouseConstruction-10}
\end{figure}

Each reduction is associated with an equation or inequality linking
the markings of the net before and after application of a rule.  The
system of inequalities gathered is shown below, with agglomeration
places $a_i$ eliminated. We show in the next section that the set of
solutions of this system, taken as markings, is exactly the set of
reachable markings of the net.

\[\begin{array}{c}
    \begin{array}{rcl@{~~~~~~~~~~~~~~~~}rcl}
      p19 & = & p20                     & p4 & = & p6 + p15 + p11 + p7\\
      p12 & = & p10 + p8                & p9 + p5 & = & p6 + p8 \\
      p13 & = & p10 + p9                & p21 + p18 & = & p22 + p20 + p23 \\
      p27 & = & p23 + p26               & 
    \end{array}\\
    \vspace{-6pt}
    \begin{array}{rcl}
      p25 + p16 + p15 + p11 + p7 & = & p26 + p23 + p22 + p20 + p17 + p14 + p10 + p8\\
      p26 + p23 + p22 + p20 + p17 & + & p14 + p10 + p8 + p6 + p3 + p2 + p1 \le 10\\
    \end{array}\\
  \end{array}\]

\vspace{12pt}
This example is totally reduced using the sequence of
reductions listed. And we have found other examples of totally
reducible net in the MCC benchmarks. In the general case, our
reduction system is not complete; some nets may be only partially
reduced, or not at all.

When a net is only partially reducible, the inequalities, together
with an explicit or logic-based symbolic description of the
reachability set of the residual net, yield a hybrid representation of
the state space of the initial net. Such hybrid representations are
still suitable for model checking reachability properties or counting
markings.

\vspace{5pt}
\noindent\textbf{Order of application of reduction rules.}
Our reduction system does not constrain the order in which reductions are
applied. Our tool attempts to apply them in an order that minimizes reduction costs.

The rules can be classified into ``local'' rules, detecting some structural patterns
on the net and transforming them, like removal of duplicate transitions or places,
or chain agglomerations, and `'non-local'' rules, like removal of redundant places
in the general case (using integer programming).
Our implementation defers the application of the non-local rules until no more local rule
can be applied. This decreases the cost of non-local reductions as they are applied to smaller nets.

Another issue is the confluence of the rules. Our reduction system is not confluent:
different reduction sequences for the same net could yield different residual nets.
This follows from the fact that agglomeration rules do not preserve in  general the
{\em ordinary} character of the net (that all arcs have weight 1), while agglomeration
rules require that the candidate places are connected by arcs of weight 1 to the same transition.

An example net exhibiting the problem is shown in
Fig.~\ref{fig:confluence}(a).  Agglomeration of places $p3$ and $p4$
in this net, followed by removal of identity transitions, yields the
net in Fig.~\ref{fig:confluence}(b).  Place $a1$ in the reduced net is
the result of agglomerating $p3$ and $p4$; this is witnessed by
equation $a1 = p3 + p4$.
Note that the arcs connecting place $a1$ to transitions $t0$ and $t1$ both have weight 2.

\begin{figure}[!htb]
  \centering
  \begin{tabular}{c@{~~~~~~~~~~~~~~~~~}c}
    \begin{tabular}{c} 
      \includegraphics[height=20em]{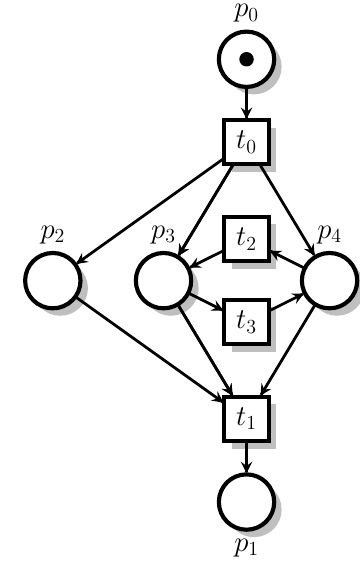}\\
      (a) \end{tabular} &
    \begin{tabular}{c} 
      \includegraphics[height=20em]{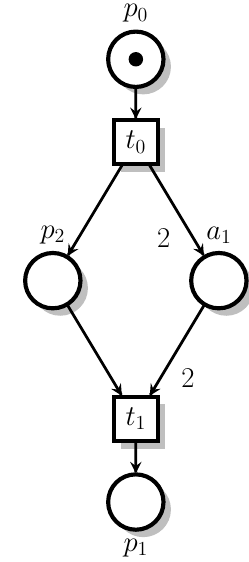}\\
      (b) \end{tabular} \\
  \end{tabular}
  \caption{Non confluence example}\label{fig:confluence}
\end{figure}

Next, place $p2$ in the reduced net is a duplicate of place $a1$,
according to the definitions of Sect.~\ref{sec:redPlace}, the
corresponding equation is $2.p2 = a1$.  But, from the same equation,
$a1$ is a duplicate of $p2$ as well. But removing $p2$ or $a1$ have
differents effects:

\begin{itemize}
\item If $a1$ is removed, then we can fully reduce the net by the
  following sequence of reductions:

\begin{quote}
\noindent{\tt\small
\begin{tabular}{l@{~~}l}
    A |- a2 = p1 + p2  & agglomeration\\
    A |- a3 = a2 + p0  & agglomeration\\
    R |- a3 = 1          & constant place
\end{tabular}
}
\end{quote}

\item If $p2$ is removed instead, then the resulting net cannot be
  reduced further: places $p0$, $a1$ and $p1$ cannot be agglomerated
  because of the presence of arcs with weight larger than 1.
\end{itemize}

Confluence of the system could be easily obtained by restricting the
agglomeration rules so that no arcs with weight larger than 1 could be
produced. But it is more effective to favour the expressiveness of our
reduction rules.

Alternatively, agglomeration rules could be generalized to handle arbitrary weights on the
arcs linking the agglomerated places; this is a scheduled improvement of our reduction system.


\section{Correctness of Markings Reconstruction}\label{sec:correctness}

We prove that we can reconstruct the markings of an (initial) net,
before application of a rule, from that of the reduced net. This
property ensues from the definition of a \emph{net-abstraction}
relation, defined below. 

We start by defining some notations useful in our proofs. We use
$\uars, \vars, \dots$ for finite sets of non-negative integer
variables. We use $Q, Q'$ for systems of linear equations (and
inequalities) and the notation $\V(Q)$ for the set of variables
occurring in $Q$. The system obtained by concatenating the relations
in $Q_1$ and $Q_2$ is denoted $(Q_1; Q_2)$ and the ``empty system'' is
denoted $\emptyset$.

A valuation $e$ of $\Nat^\vars$ is a solution of $Q$, with
$\vars = \V(Q)$, if all the relations in $Q$ are (trivially) valid
when replacing all variables $x$ in $\vars$ by their value $e(x)$. We
denote $\qsol{Q}$ the subset of $\Nat^{\V(Q)}$ consisting in all the
solutions of $Q$.

If $E \subseteq \Nat^{\vars}$ then $E \proj \uars$ is the projection
of $E$ over variables $\uars$, that is the subset of $\Nat^{\uars}$
obtained from $E$ by restricting the domain of its elements to
$\uars$. conversely, we use $E \uparrow \uars$ to denote the lifting
of $E$ to $\uars$, that is the largest subset $E'$ of $\Nat^{\uars}$
such that $E' \proj \vars = E$.

\begin{definition}[Net-abstraction]\label{def:abstraction} A triple $(N_1,Q,N_2)$
  is a {\em net-abstraction}, or simply an abstraction, if $N_1$,
  $N_2$ are nets with respective sets of places $P_1$, $P_2$ (we may
  have $P_1 \cap P_2 \not= \emptyset)$, $Q$ is a linear system of
  equations, and:
  \[
    \reach(N_1) = \left ( (\reach(N_2) \lift \vars) \cap (\qsol{Q}
      \lift \vars) \right ) \proj P_1 \qquad \text{where }
    \vars= \V(Q) \cup P_1 \cup P_2~.
  \]
\end{definition}

Intuitively, $N_2$ is an abstraction of $N_1$ (through $Q$) if, from
every reachable marking $m \in \reach(N_2)$, the markings obtained
from solutions of $Q$---restricted to those solutions such that
$x = m(x)$ for all ``place variable'' $x$ in $P_2$---are always
reachable in $N_1$. The definition also entails that all the markings
in $\reach(N_1)$ can be obtained this way.

\begin{theorem}[Net-abstractions from reductions]\label{the:soundness}
For any nets $N$, $N_1$, $N_2$:
\begin{enumerate}

\item $(N,\emptyset,N)$ is an abstraction;

\item If $(N_1,Q,N_2)$ is an abstraction then $(N_1,Q',N_3)$ is an abstraction if either:

 \begin{itemize}
 \item[(T)] $Q'= Q$ and $N_3$ is obtained from $N_2$ by removing a
   redundant transition (see Sect. \ref{sec:redTrans});

 \item[(R)] $Q'= (Q; k.p = l)$ and $N_3$ is
   obtained from $N_2$ by removing a redundant place $p$ and $k.p = l$
   is the associated marking equation (see Sect.~\ref{sec:redPlace});

 \item[(A)] $Q'= (Q; a = \Sigma_{p \in A} (p))$, where
   $a \not\in \V(Q)$ and $N_3$ is obtained from $N_2$ by agglomerating
   the places in $A$ as a new place, $a$ (see Sect.~\ref{sec:agglo});

 \item[(L)] $Q'= (Q; p \le k)$ and $N_3$ is obtained from $N_2$ by
   removal of a source-sink pair $(p,t)$ with $m_0(p) = k$ (see
   Sect.~\ref{sec:system}).
 \end{itemize}
\end{enumerate}
\end{theorem}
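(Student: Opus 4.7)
My plan is to proceed by induction on the reduction sequence, with each case of item 2 reduced to a local check using the reachability characterisations already proved in Theorems~\ref{the:redTrans}, \ref{the:redPlace}, \ref{the:agglo}, and~\ref{the:sourcesink}. Item~1 is immediate: when $Q = \emptyset$, $\qsol{Q}$ imposes no constraint, so the lifted intersection is just $\reach(N) \lift \vars$, whose projection on $P_1 = P$ is $\reach(N)$.

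For item~2, let $\vars = \V(Q) \cup P_1 \cup P_2$ and $\vars' = \V(Q') \cup P_1 \cup P_3$. In every case I would establish
\[
 \bigl((\reach(N_3) \lift \vars') \cap (\qsol{Q'} \lift \vars')\bigr) \proj \vars \;=\; (\reach(N_2) \lift \vars) \cap (\qsol{Q} \lift \vars),
\]
after which projecting further onto $P_1 \subseteq \vars$ and invoking the inductive hypothesis closes the goal. Writing $Q' = (Q;E)$ where $E$ is the single new relation, and noting $\qsol{Q'} \lift \vars' = (\qsol{Q} \lift \vars') \cap (\qsol{E} \lift \vars')$, the problem reduces to showing that intersecting $\reach(N_3) \lift \vars'$ with $\qsol{E} \lift \vars'$ (then projecting back to $\vars$) reproduces $\reach(N_2) \lift \vars$; this is precisely the content of the earlier reachability theorems applied to a single reduction step.

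Concretely: case~(T) is immediate because Theorem~\ref{the:redTrans} gives $\reach(N_3) = \reach(N_2)$ and $Q' = Q$. For case~(R), Theorem~\ref{the:redPlace} says that a marking of $N_2$ is determined by a marking of $N_3$ together with the equation $k.p = l$, so adding this equation to $Q$ does exactly the job; here $p \in \V(Q')$ even though $p \notin P_3$, which is allowed by the definition of $\vars'$. Case~(L) is analogous: $\reach(N_2)$ is $\reach(N_3)$ augmented with a marking of the removed place $p$ constrained only by $p \le k$, matching $\qsol{p \le k}$ exactly. Case~(A) is the most delicate because agglomeration collapses many markings on $A$ to a single value of $a$, so one direction is a non-trivial reachability statement; this is handled by Theorem~\ref{the:agglo}, which asserts that every non-negative decomposition of $m(a)$ over the places in $A$ is actually reachable in $N_2$, matching precisely the set of non-negative solutions of $a = \Sigma_{p \in A} p$ when the variables in $A$ are free.

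The main obstacle I anticipate is bookkeeping: variables appearing in $Q'$ may correspond to places already removed from the current net (such as $p$ in cases R and L, or all of $A$ in case A), so care is needed to verify that the lift/projection manipulations commute with intersection as claimed and that the freshness condition $a \notin \V(Q)$ in case~(A) is used to prevent accidental clashes. The cleanest way to handle all four cases uniformly is to prove a short auxiliary lemma: whenever a system $E$ on variables $\uars \cup \wars$ satisfies $\qsol{E} \proj \uars = \Nat^{\uars}$ (i.e.\ the ``new'' variables $\wars \setminus \uars$ are functionally or freely determined from $\uars$), then for any $R \subseteq \Nat^{\uars}$ one has $(R \lift (\uars \cup \wars)) \cap \qsol{E} \proj \uars = R$. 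Each reduction case is then a direct instance of this lemma combined with the corresponding reachability theorem.
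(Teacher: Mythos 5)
Your proposal is correct and follows essentially the same route as the paper: item 1 is immediate from the definition, and each case of item 2 substitutes the single-step reachability characterisation from Theorems~\ref{the:redTrans}, \ref{the:redPlace}, \ref{the:agglo} and~\ref{the:sourcesink} into the inductive hypothesis and then discharges the lift/projection bookkeeping (using the freshness of $a$ in case (A) exactly where the paper does). Your auxiliary commutation lemma is just a systematised version of the ad hoc lift/project manipulations the paper performs inline, so nothing essential differs.
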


\begin{proof}
  Property (1) is obvious from Definition \ref{def:abstraction}.
  Property (2) is proved by case analysis. First, let $\vars = \V(Q) \cup P_1 \cup P_2$
  and $\uars = \vars \cup P_3$ and notice that for all candidate $(N_1,Q',N_3)$
  we have $\V(Q') \cup P_1 \cup P_3 = \uars$.
  Then, in each case, we know
  $(H): \reach(N_1) = (\reach(N_2) \lift \vars \cap \qsol{Q} \lift
  \vars) \proj P_1$ and we
  must prove
  $(G): \reach(N_1) = (\reach(N_3) \lift \uars \cap \qsol{Q'} \lift
  \uars) \proj P_1$.

  \vspace{3pt}
  \noindent\emph{Case (T)} : $Q'= Q$. By Th.~\ref{the:redTrans}, we
  have $P_3 = P_2$, hence $\vars =\uars$, and
  $\reach(N_3) = \reach(N_2)$.  Replacing $\reach(N_2)$ by
  $\reach(N_3)$ and $\vars$ by $\uars$ in (H) yields (G).

  \vspace{3pt}
  \noindent\emph{Case (R)} :
  By Th.~\ref{the:redPlace} we have :
  $\reach(N_2) = \reach(N_3) \lift P_2 \cap \qsol{k.p = l} \lift P_2$.
  replacing $\reach(N_2)$ by this value in $(H)$ yields
  $\reach(N_1) = ( (\reach(N_3) \lift P_2 \cap \qsol{k.p = l} \lift
  P_2) \lift \vars \cap \qsol{Q} \lift \vars) \proj P_1$. Since
  $P_2 \subseteq \vars$, we may safely lift to $\vars$ instead of
  $P_2$, so:
  $\reach(N_1) = ( \reach(N_3) \lift \vars \cap \qsol{k.p = l} \lift
  \vars \cap \qsol{Q} \lift \vars) \proj P_1$.  Which is equivalent
  to:
  $\reach(N_1) = (\reach(N_3) \lift \vars \cap \qsol{Q; k.p = l} \lift
  \vars) \proj P_1$, and equal to (G) since $P_3 \subseteq \vars$ and
  $Q'= (Q; k.p = l)$.

  \vspace{3pt}
  \noindent\emph{Case (A)}: Let $S_p$ denotes the value $\Sigma_{p \in A} (p)$.
  By Th.~\ref{the:agglo} we have:
  $\reach(N_2) = (\reach(N_3) \lift (P_2 \cup P_3) \cap \qsol{a = S_p}
  \lift (P_2 \cup P_3))\proj P_2$.  Replacing $\reach(N_2)$ by this
  value in $(H)$ yields:
  $\reach(N_1) = ( ((\reach(N_3) \lift (P_2 \cup P_3) \cap \qsol{a =
    S_p} \lift (P_2 \cup P_3)) \proj P_2) \lift \vars \cap \qsol{Q}
  \lift \vars) \proj P_1$.  Instead of $\vars$, we may lift to $\uars$
  since $\uars = \vars \cup \{a\}$, $a \not\in \V(Q)$ and
  $a \not\in P_1$, so:
  $\reach(N_1) = ( ((\reach(N_3) \lift (P_2 \cup P_3) \cap \qsol{a =
    S_p} \lift (P_2 \cup P_3)) \proj P_2) \lift \uars \cap \qsol{Q}
  \lift \uars) \proj P_1$.  Projection on $P_2$ may be omitted since
  $P_2 \cup P_3 = P_2 \cup \{a\}$ and $a \not\in \V(Q)$, leading to:

  $\reach(N_1) = ( (\reach(N_3) \lift (P_2 \cup P_3) \cap \qsol{a =
    S_p} \lift (P_2 \cup P_3)) \lift \uars \cap \qsol{Q}\lift \uars)
  \proj P_1$.

  \noindent Since $P_2 \cup P_3 \subseteq \uars$, this is equivalent
  to:
  $\reach(N_1) = (\reach(N_3) \lift \uars \cap \qsol{a = S_p} \lift
  \uars \cap \qsol{Q}\lift \uars) \proj P_1$. Grouping equations
  yields:
  $\reach(N_1) = ( \reach(N_3) \lift \uars \cap \qsol{Q;a = S_p} \lift
  \uars) \proj P_1$, which is equal to (G) since $Q'= (Q;a = S_p)$.

  \vspace{3pt}
  \noindent\emph{case (L)}:   The proof is similar to that of case (R)
  and is based on the relation
  $\reach(N_2) = \reach(N_3) \lift P_2 \cap \qsol{p \le k} \lift P_2$,
  obtained from Th.~\ref{the:sourcesink}. 
\end{proof}

Theorem~\ref{the:soundness} states the correctness of our reduction
systems, since we can compose reductions sequentially and always
obtain a net-abstraction.
In particular, if a net $N$ is fully reducible, then we can derive a
system of linear equations $Q$ such that $(N, Q, \emptyset)$ is a
net-abstraction. In this case the reachable markings of $N$ are
exactly the solutions of $Q$, projected on the places of $N$. 
If the reduced net, say $N_r$, is not empty then each marking
$m \in \reach(N_r)$ represents a set of markings
$\qsol{Q}_m \subset \reach(N)$: the solution set of $Q$ in which the
places of the residual net are constrained as in $m$, and then
projected on the places of $N$. Moreover the family of sets
$\{\,\qsol{Q}_m~\mid~m \in \reach(N_r)\}$ is a partition of
$\reach(N)$.


\section{Counting Markings}
\label{sec:counting}

We consider the problem of counting the markings of a net $N$ from the
set of markings of the residual net $N_r$ and the (collected) system
of linear equations $Q$.
For totally reduced nets, counting the markings of $N$ resumes to that
of counting the number of solutions in non negative integer variables
of system $Q$.
For partially reduced nets, a similar process must be iterated over all markings $m$ reachable in $N_r$
(a better implementation will be discussed shortly).

\vspace{5pt}
\noindent\textbf{Available methods.}
Counting the number of integer solutions of a linear system of
equations (inequalities can always be represented by equations by the
addition of slack variables)
is an active area of research.

A method is proposed in~\cite{doi:10.1137/1.9781611972870.15},
implemented in the tool \azove, for the particular case where
variables take their values in $\{0,1\}$.
The method consists of building a Binary Decision Diagram for each
equation, using Shannon expansion, and then to compute their
conjunction (this is done with a specially tailored algorithm).
The number of paths
of the BDD gives the expected result.  Our experiments with \azove
show that, although the tool can be fast, its performances on larger
system heavily depend on the ordering chosen for the BDD variables, a
typical drawback of decision diagram based techniques. In any case,
its usage in our context would be limited to safe nets.

For the general case, the current state of the art can be found in the work of De~Loera
et al.~\cite{delorea,DELOERA20041273}
on counting lattice points in convex polytopes. Their approach is implemented in a tool
called \latte; it relies on algebraic and geometric methods;
namely the use of rational functions and the decomposition of cones into unimodular cones.
Our experiments with \latte show that it can be conveniently used
on systems with, say, less than $50$ variables. For instance, \latte is
strikingly fast (less than $1$s) at counting the number of solutions
of the system computed in Sect.~\ref{sec:impl-example-appl}.
Moreover, its running time does not generally depend on the constants
found in the system. As a consequence, computing the reachability
count for $10$ or, say, $10^{12}$ houses takes exactly the same time.

\vspace{5pt}
\noindent\textbf{An ad-hoc method.}
Though our experiments with \latte suffice to show that these
approaches are practicable, we implemented our own counting method.
Its main benefits over \latte, important for practical purposes, are
that it can handle systems with many variables (say thousands), though
it can be slower than \latte on small systems.
Another reason for finding an alternative to \latte is that it
provides no builtin support for parameterized systems, that is in the
situation where we need to count the solutions of many instances of
the same linear system differing only by some constants.

Our solution takes advantage of the stratified structure of the
systems obtained from reductions, and it relies on combinatorial rather
than geometric methods. While we cannot describe this tool in full
details, we illustrate our approach and the techniques involved on a
simple example.

Consider the system of equations obtained from the PN corresponding to
the dashed zone of Fig.~\ref{net:HouseConstruction-10}. This net
consists of places in the range $p_{18}$---$p_{27}$ and is reduced by
our system to a single place, $a_{13}$.  The subset of marking
equations related to this subnet is:
\begin{displaymath}
\begin{array}{l@{\hspace{6em}}l}
 R \vdash p_{19} = p_{20} &  R \vdash p_{27} = p_{23} + p_{26}\\
 A \vdash a_{4} = p_{21} + p_{18} &  A \vdash a_{11} = p_{23} + a_{5}\\
 A \vdash a_{5} = p_{22} + p_{20} &  A \vdash a_{13} = a_{11} + p_{26} \\
 R \vdash a_{4} = a_{5} + p_{23}\\
\end{array}
\hspace{4em} (Q)
\end{displaymath}
Assume place $a_{13}$ is marked with $n$ tokens.  Then, by
Th.~\ref{the:soundness}, the number of markings of the original net
corresponding with marking $a_{13} = n$ in the reduced net is the
number of non-negative integer solutions to system $(Q, a_{13} = n)$.
Let us define the function $A_{13} : \Nat \longrightarrow \Nat$
that computes that number.

We first simplify system $(Q)$. Note that no agglomeration is involved
in the redundancy (R) equations for $p_{19}$ and $p_{27}$, so these
equations have no effects on the marking count and can be omitted.
After elimination of variable $a_5$ and some rewriting, we obtain the
simplified system $(Q')$:
\begin{displaymath}
\begin{array}{l@{\hspace{6em}}l}
 A \vdash a_{4} = p_{21} + p_{18} & R \vdash a_{4} = a_{11} \\
  A \vdash a_{11} = p_{23} + p_{22} + p_{20} & A \vdash a_{13} = a_{11} + p_{26}
\end{array}
\hspace{4em} (Q')
\end{displaymath}
Let $\nos{k}{x}$ denote the expression ${x + k - 1} \choose {k - 1}$,
which denotes the number of ways to put $x$ tokens into $k$ slots.
The first equation is $a_{4} = p_{21} + p_{18}$. If $a_{4} = x$, its number of solutions
is $\nos{2}{x} = x + 1$.
The second equation is $a_{11} = p_{23} + p_{22} + p_{20}$. If $a_{11} = x$, its number of solutions
is $\nos{3}{x} = \frac{(x+2)(x+1)}{2}$.

Now consider the system consisting of the first two equations and the redundancy equation $a_4 = a_{11}$.
If $a_{11} = x$, its number of solutions is $\nos{2}{x} \times \nos{3}{x}$
(the variables in both equations being disjoint).
Finally, by noticing that $a_{11}$ can take any value between $0$ and $n$,
we get:
\begin{center}
\begin{math}
  \displaystyle
  A_{13}(n) = \sum_{a_{11}=0}^{n} \nos{2}{a_{11}} \times \nos{3}{a_{11}}
\end{math}
\end{center}

This expression is actually a polynomial, namely
\begin{math}
\frac{1}{8}n^4 + \frac{11}{12}n^3 + \frac{19}{8}n^2 + \frac{31}{12}n + 1
\end{math}.\\
By applying the same method to the whole net of
Fig.~\ref{net:HouseConstruction-10}, we obtain an expression involving
six summations, which can be reduced to the following 18th-degree polynomial in
the variable $X$ denoting the number of tokens in place $p_1$.

\[{\scriptscriptstyle\begin{split} 
      \tfrac{11}{19401132441600} X^{18} +
      \tfrac{1}{16582164480} X^{17} +
      \tfrac{2491}{836911595520} X^{16} +
      \tfrac{1409}{15567552000} X^{15}\\
      { } + 
      \tfrac{3972503}{2092278988800} X^{14} + 
      \tfrac{161351}{5535129600} X^{13} +
      \tfrac{32745953}{96566722560} X^{12} + 
      \tfrac{68229017}{22353408000} X^{11}\\
      { } +
      \tfrac{629730473}{29262643200} X^{10} + 
      \tfrac{83284643}{696729600} X^9 +
      \tfrac{3063053849}{5852528640} X^8 + 
      \tfrac{74566847}{41472000} X^7\\
      { } + 
      \tfrac{1505970381239}{313841848320} X^6 +
      \tfrac{32809178977}{3353011200} X^5 +
      \tfrac{259109541797}{17435658240} X^4 + 
      \tfrac{41924892461}{2594592000} X^3\\
      { } + 
      \tfrac{4496167537}{381180800} X^2 + 
      \tfrac{62925293}{12252240} X^1 + 
      1
\end{split}}
\]

In the general case of partially reduced nets, the computed polynomial
is a multivariate polynomial with at most as many variables as places
remaining in the residual net.  When that number of variables is too
large, the computation of the final polynomial is out of reach, and we
only make use of the intermediate algebraic term.


\section{Computing Experiments}
\label{sec:experimental-results}

We integrated our reduction system and counting method with a state
space generation tool, {\em tedd}, in the framework of our \emph{TINA}
toolbox for analysis of Petri nets \cite{berthomieu2004tool}
(\url{www.laas.fr/tina}). Tool \emph{tedd} makes use of symbolic
exploration and stores markings in a Set Decision Diagram
\cite{Thierry-MiegPHK09}.  For counting markings in presence of
agglomerations, one has the choice between using the external tool
\latte or using our native counting method discussed in
Sect. \ref{sec:counting}.

\vspace{5pt}
\noindent\textbf{Benchmarks.}
Our benchmark is constituted of the full collection of Petri nets used
in the Model Checking Contest~\cite{mcc:2017,hk2017}. It includes $627$ nets,
organized into $82$ classes (simply called models). Each class includes several
nets (called instances) that typically differ by their initial
marking or by the number of components constituting the net.
The size of the nets vary widely, from $9$ to $50\,000$ places, $7$ to $200\,000$ transitions, and $20$
to $1\,000\,000$ arcs.  Most nets are ordinary (arcs have weight $1$)
but a significant number are generalized nets. Overall,
the collection provides a large number of PN with various structural
and behavioral characteristics, covering a large variety of use cases.

\vspace{5pt}
\noindent\textbf{Reduction ratio and prevalence.}
Our first results are about how well the reductions perform. We provide two
different reduction strategies: \verb+compact+, that applies all reductions
described in Sect.~\ref{sec:reductions}, and \verb+clean+, that only applies
removal of redundant places and transitions.
The reduction ratios on number of places (number of places before and after reduction)
for all the MCC instances are shown in Fig.~\ref{net:histo-ratio},
sorted in descending order. We overlay the results for our two reduction
strategies (the lower, in light color, for \verb+clean+ and the upper,
in dark, for \verb+compact+). We see that the impact of strategy \verb+clean+
alone is minor compared to \verb+compact+.
Globally, Fig.~\ref{net:histo-ratio} shows that reductions have a
significant impact on about half the models, with a very high impact
on about a quarter of them.  In particular, there is a surprisingly
high number of models that are totally reducible by our approach
(about $19$\% of the models are fully reducible).

\begin{figure}[!tb]
  {\centering
    \raisebox{-0.5\height}{\includegraphics[width=0.98\textwidth]{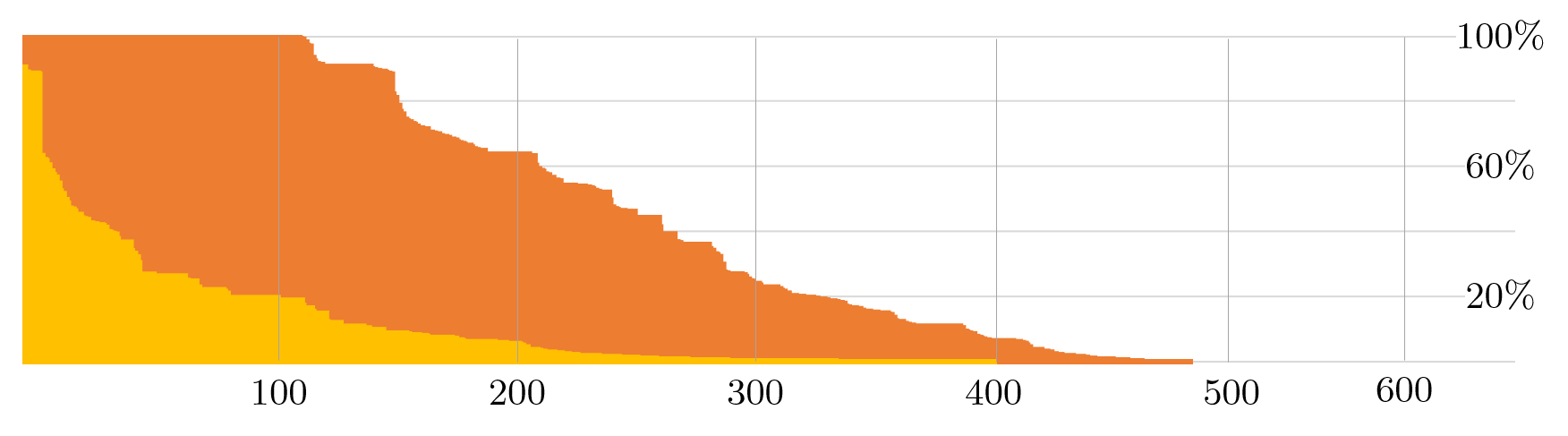}}
    \caption{Distribution of reduction ratios (place count) over the
      $627$ PN instances.\label{net:histo-ratio}}}
\end{figure}

\vspace{5pt}
\noindent\textbf{Computing time of reductions.}
Many of the reduction rules implemented have a cost polynomial in the size
of the net.
The rule removing redundant places in the general case is more complex as
it requires to solve an integer programming problem.
For this reason  we limit its application to nets with less than $50$ places.
With this restriction, reductions are computed in a few seconds in most cases,
and in about $3$ minutes for the largest nets.
The restriction is necessary but, because of it, we do not reduce some
nets that would be fully reducible otherwise.

\vspace{5pt}
\noindent\textbf{Impact on the marking count problem.}
In our benchmark, there are $169$ models, out of $627$, for which 
no tool was ever able to compute a marking count.
With our method, we could count the markings of at least $14$ of them.

\begin{table}[!htb]
  \setlength\belowcaptionskip{-10pt}
  \setlength{\tabcolsep}{4pt}
  \begin{center}
    \begin{tabular}[c]{|l||r|l||r|r|r|r|}
      \hline
      Net instance & \multicolumn{2}{c||}{size}  & MCC   
      & \emph{tedd} &
                      \emph{tedd} & speed\\ \cline{2-3}
                   &  $\sharp$ places & $\sharp$ states  & (best) & native & \latte & up\\ %
      \hline
      \hline
      BART-050                & 11\,822 & 1.88e118       & 2\,800 & 346  & -  
                   & \rrq{2800}{346}\\
      BART-060                & 14\,132 & 8.50e141       & -    & 496  & -    
                   & $\infty$\\
      DLCround-13a            & 463   & 2.40e17        & 9    & 0.33 & -    
                   & \rrq{9}{0.33}\\
      FlexibleBarrier-22a     & 267   & 5.52e23        & 5    & 0.25 & -   
                   & \rrq{5}{0.25}\\
      NeighborGrid-d4n3m2c23  & 81    & 2.70e65        & 330  & 0.21 & 44   
                   & \rrq{330}{0.21}\\
      NeighborGrid-d5n4m1t35  & 1\,024  & 2.85e614       & -    & 340  & -    
                   & $\infty$\\
      Referendum-1000         & 3\,001  & 1.32e477       &  29  & 12   & -    
                   & \rrq{29}{12}\\
      RobotManipulation-00050 & 15    & 8.53e12        & 94   & 0.1  &
                                                                       0.17 
                   & \rrq{94}{0.1}\\
      RobotManipulation-10000 & 15    & 2.83e33        & -    & 102  &
                                                                       0.17 
                   & $\infty$ \\
      Diffusion2D-50N050      & 2\,500  & 4.22e105       & 1\,900 & 5.84 & -    
                   & \rrq{1900}{5.84}\\
      Diffusion2D-50N150      & 2\,500  & 2.67e36        & -    & 5.86 & -    
                   & $\infty$\\
      DLCshifumi-6a           & 3\,568  & 4.50e160       & 950  & 6.54 & -    
                   & \rrq{950}{6.54}\\
      Kanban-1000             & 16    & 1.42e30        & 240  & 0.11 &
                                                                       0.24 
                   & \rrq{240}{0.11}\\
      HouseConstruction-100   & 26    & 1.58e24        & 630  & 0.4  &
                                                                       0.85 
                   & \rrq{630}{0.4}\\
      HouseConstruction-500   & 26    & 2.67e36        & -    & 30   &
                                                                       0.85 
                   & $\infty$\\

        \hline
        \hline

        Airplane-4000           & 28\,019 & 2.18e12 & 2520 & 102   & - & \rrq{2520}{102}\\
        AutoFlight-48a          & 1127    & 1.61e51 &   19 &  3.57 & - & \rrq{19}{3.57}\\
        DES-60b                 &  519    & 8.35e22 & 2300 & 364   & - & \rrq{2300}{364}\\
        Peterson-4              & 480     & 6.30e8  &  470 &  35.5 & - & \rrq{470}{35.5}\\
        Peterson-5              & 834     & 1.37e11 &    - & 1200  & - & $\infty$\\
        \hline
    \end{tabular}
  \end{center}
  \caption{Computation times (in seconds) and speed-up for counting
    markings on some totally (top)
    and partially (bottom) reduced nets\label{tbl:reducible}}
\end{table}

If we concentrate on \emph{tractable nets}---instances managed by at least one tool in the
MCC 2017---our approach yields generally large improvements on the
time taken to count markings; sometimes orders of magnitude faster.
Table~\ref{tbl:reducible} (top) lists the CPU time (in seconds) for counting
the markings on a selection of fully reducible instances. We give the
best time obtained by a tool during the last MCC (third column) and
compare it with the time obtained with \emph{tedd}, using two
different ways of counting solutions (first with our own, native,
method then with \latte). We also give the resulting speed-up. These
times also include parsing and applying reductions. An absent
value ($-$) means that it cannot be computed in less than 1 hour with 16 Gb
of storage.

Concerning partially reducible nets, the improvements are less spectacular in general though
still significant. Counting markings in this case is more expensive than for totally reduced nets.
But, more importantly, we have to build in that case a representation of the state space of the residual net,
which is typically much more expensive than counting markings.
Furthermore, if using symbolic methods for that purpose, several other parameters come into
play that may impact the results, like the choice of an order on decision diagram variables or the particular
kind of diagrams used.
Nevertheless, improvements are clearly visible on a number of example models; 
some speedups are shown in Table \ref{tbl:reducible} (bottom).
Also, to minimize such side issues, instead of comparing {\em tedd} with \verb+compact+ reductions with the best tool
performing at the MCC, we compared it with {\em tedd} without reductions or with the weaker
\verb+clean+ strategy. In that case, \verb+compact+ reductions are almost always effective at reducing computing times.

Finally, there are also a few cases where applying reductions lower
performances, typically when the reduction ratio is very small.
For such quasi-irreducible nets, the time spent computing reductions
is obviously wasted.


\section{Related Work and Conclusion}

Our work relies on well understood structural reduction methods,
adapted here for the purpose of abstracting the state space of a
net. This is done by representing the effects of reductions by a
system of linear equations. To the best of our knowledge, reductions
have never been used for that purpose before.

Linear algebraic techniques are widely used in Petri net theory but,
again, not with our exact goals.
It is well known, for instance, that the state space of a net is
included in the solution set of its so-called ``state equation'', or
from a basis of marking invariants. But these solutions, though exact
in the special case of live marked graphs, yield approximations that
are too coarse.
Other works take advantage of marking invariants obtained from semiflows on places,
but typically for optimizing the representation of markings in explicit or symbolic
enumeration methods rather than for helping their enumeration,
see e.g. \cite{schmidt2003using,wolf2007generating}.
Finally, these methods are only remotely related to our.

Another set of related work concerns symbolic methods based on the use
of decision diagrams. Indeed they can be used to compute the state
space size. In such methods, markings are computed symbolically and
represented by the paths of some directed acyclic graph, which can be
counted efficiently.  Crucial for the applicability of these methods is
determining a ``good'' variable ordering for decision diagram
variables, one that maximizes sharing among the paths.  Unfortunately,
finding a convenient variable ordering may be an issue, and some
models are inherently without sharing. For example, the best symbolic
tools participating to the MCC can solve our illustrative example only
for $p_1 \le 100$, at a high cost, while we compute the result in a
fraction of a second for virtually any possible initial marking of
$p_1$.

Finally, though not aimed at counting markings nor relying on reductions,
the work reported in \cite{Stahl_AWPN} is certainly the closest to
our. It defines a method for decomposing the state space of a net into the
product of ``independent sets of submarkings''. The ideas discussed in the paper resemble
what we achieved with agglomeration. In fact, the running example
in \cite{Stahl_AWPN}, reproduced here in Figure \ref{fig:petri}, is a fully reducible net in our approach.
But no effective methods are proposed to compute decompositions.\\

\noindent\textbf{Concluding remarks.}
We propose a new symbolic approach for representing the state space of a PN relying
on systems of linear equations.
Our results show that the method is almost always effective at
reducing computing times and memory consumption for counting markings.
Even more interesting is that our methods can be used together with traditional explicit
and symbolic enumeration methods, as well as with other abstraction techniques like symmetry
reductions for example. They can also help for other problems, like reachability analysis.

There are many opportunities for further research. For the close
future, we are investigating richer sets of reductions for counting
markings and application of the method to count not only the markings,
but also the number of transitions of the reachability
graph. Model-checking of linear reachability properties is another
obvious prospective application of our methods.
On the long term, a question to be investigated is how to obtain efficiently
fully equational descriptions of the state spaces of bounded Petri nets.

\newpage

\bibliography{counting}

\end{document}